\newtheorem{theorem}{Theorem}
\newtheorem{problem}{Problem}
\newtheorem{assumption}{Assumption}
\newtheorem{remark}{Remark}
\newcommand{\vect}[1]{\mathbf{#1}}
\newcommand{\vectg}[1]{{\boldsymbol{#1}}}
\newcommand\norm[1]{\left\|#1\right\|}
\title{\LARGE \bf
Vision-based Multirotor Control for Spherical Target Tracking:\\
a Bearing-Angle Approach
}
\author{Marcelo~Jacinto
        and Rita~Cunha
\thanks{The work of M. Jacinto was supported by the PhD Grant 2022.09587.BD from Funda\c{c}{\~a}o para a Ci{\^e}ncia e a Tecnologia
(FCT), Portugal. This work was also supported by FCT, Portugal through LARSyS [DOI: 10.54499/LA/P/0083/2020, 10.54499/UIDP/50009/2020, and  10.54499/UIDB/50009/2020].}
\thanks{The authors are with the Institute for Systems and Robotics (ISR), Laboratory of Robotics and Engineering Systems (LARSyS), Instituto Superior Técnico, University of Lisbon, Portugal. E-mails: {\tt \small \{mjacinto, rita\}@isr.tecnico.ulisboa.pt}.}
}
\begin{document}

\SetWatermarkText{This paper has been accepted for presentation at the \\ 2025 IEEE European Control Conference (ECC)}
\SetWatermarkColor[gray]{0.3}
\SetWatermarkFontSize{0.5cm}
\SetWatermarkAngle{0}
\SetWatermarkVerCenter{1.5cm}

\maketitle
\thispagestyle{empty}

\begin{abstract}
This work addresses the problem of designing a visual servo controller for a multirotor vehicle, with the end goal of tracking a moving spherical target with unknown radius. To address this problem, we first transform two bearing measurements provided by a camera sensor into a bearing-angle pair. We then use this information to derive the system's dynamics in a new set of coordinates, where the angle measurement is used to quantify a relative distance to the target. Building on this
system representation, we design an adaptive nonlinear control algorithm that takes advantage of the properties of the new system geometry and assumes that the target follows a constant acceleration model. Simulation results illustrate the performance of the proposed control algorithm.
\end{abstract}
\section{INTRODUCTION}
\label{sec:introduction}
The problem of relative target position estimation and tracking has been an active area of research in recent years, with significant progress made in developing robust methods for solving this challenge. This
field of research has numerous practical applications in ground, marine and aerial robotics, such as search-and-rescue operations, surveillance missions, homing on moving docks \cite{Moving_dock_homing}, and more recently aerial cinematography \cite{aerial_cinematography_Bahareh}. These applications typically make use of sensors such as monocular cameras, which provide measurements that can be modeled as bearing vectors \cite{ducted_fan_le_bras}.

Recent works have attempted to address the problem of target tracking by decoupling it into separate estimation, control, and trajectory planning problems \cite{helical_guidance,bearing_angle_approach}. A common approach to solve the estimation problem is to cast the target-tracker nonlinear system into a linear time-varying (LTV) system composed of a single-integrator agent and target moving with constant unknown velocity, in which the agent is only able to measure its direction to the target, but not the range \cite{BATISTA20131065_marine}. Another alternative is to follow a nonlinear adaptive observer design methodology for estimating the position and velocity of an agent relative to an unknown set of landmarks of a target \cite{pe_lebras_2027_springer}. With both methodologies, for such systems to be rendered observable, they are required that the relative trajectories of the target-tracker or tracker-landmarks are persistently exciting (PE), i.e., they yield sufficient variation in the measured bearing vectors over a period of time \cite{BATISTA2011101_generic_observability,BATISTA20131065_marine,pe_lebras_2027_springer}. This requirement constraints significantly the design of trajectories that can be executed by a tracker vehicle \cite{helical_guidance,observalibility_enhancements,anjaly_FIM_TARGET}.

Another approach to solve the problem is to follow an image based visual servo (IBVS) control framework, where the extracted image features are used directly in a sensor based control strategy \cite{ducted_fan_le_bras,Moving_dock_homing,le_bras_velocity_estimation}. Such methods have been used extensively for stabilizing vehicles with respect to planar surfaces and landing on moving planar targets \cite{Moving_dock_homing}. These methods typically do not rely on PE conditions to converge as they do not provide an estimate of the absolute distance to the target. On the other hand, they may rely on scaled relative distance or velocity measurements obtained via optical flow in their control feedback \cite{ducted_fan_le_bras,Moving_dock_homing}.

In many aerial cinematography applications, such as head tracking, the main goal is to capture images of a target from a pre-defined perspective without relying explicitly in direct measurements or estimates of the absolute distance to a target. Bounding boxes are commonly used to enclose detected targets in an image, from which a central bearing pointing towards the target can be extracted. Borrowing inspiration from \cite{bearing_angle_approach}, we can consider that a target can be approximated by a sphere and use the bounding-box surrounding a target to provide extra information in the form of a bearing-angle measurement pair. This additional angle measurement will be inversely proportional to the real distance to the target, and dictate its size in the image plane.

The main contribution of this work is the proposal of an IBVS-inspired nonlinear adaptive control law for tracking a spherical-shaped target with unknown radius from a fixed relative position. We relax the constant velocity model typically considered in the literature, by assuming that the target moves with constant unknown acceleration. To achieve this, we borrow inspiration from \cite{bearing_angle_approach,ducted_fan_le_bras} and transform two bearing measurements into a bearing-angle pair. This new angle measurement plays the crucial role of providing a relative information of the distance between the vehicle and the target, which can be used in practical scenarios, as a reference to dictate the desired size of the target in the image. To derive the system dynamics, we start by taking the time-derivatives of the bearing-angle pair. From this new nonlinear representation the dynamics of the system represented in polar coordinates naturally arise. We then proceed to take advantage of this new coordinate system to derive a control law where the relative size of the target on the image and the relative direction between the vehicle and the target depend on orthogonal components of the relative velocity. Finally, we account for the limited field-of-view of a camera sensor by constraining the input generated by the control law.

The paper is organized as follows: Section \ref{sec:preliminaries} provides mathematical definitions; Section \ref{sec:problem_statement} formulates the target tracking problem; Section \ref{sec:Control_design} focus on control design and stability analysis. Section \ref{sec:target_visibility_constraints} addresses the problem of enforcing visibility constraints, followed by numerical simulations in Section \ref{sec:simulation_results}. Concluding remarks are made in Section \ref{sec:conclusion}.
\section{PRELIMINARIES}
\label{sec:preliminaries}
The identity matrix of $\mathbb{R}^{n \times n}$ is denoted $\vect{I}_{n}$ and $\vect{0}$ denotes a vector of zeros (without underscript). The symbol $\norm{\cdot}$ denotes the euclidean norm. Vectors are lower case and bold while matrices are uppercase. The map  $\vectg{S}(\cdot) : \mathbb{R}^3 \rightarrow \mathfrak{so}(3)$  yields a skew-symmetric matrix $\vectg{S}(\vect{x}) \vect{y} = \vect{x} \times \vect{y} \text{ , } \forall \, \vect{x} ,\vect{y} \in \mathbb{R}^3$. Let $\vectg{S}(\cdot)^{-1}$ be its inverse map. The orthogonal projection operator $\vectg{\Pi}_\vect{y}$ which projects an arbitrary vector $\vect{x} \in \mathbb{R}^{3}$ onto the plane orthogonal to $\vect{y}\in \mathbb{S}^2$ is given by $\vectg{\Pi}_\vect{y}:=\vect{I}_{3} - \vect{y}\vect{y}^\top$.
\section{PROBLEM STATEMENT}
\label{sec:problem_statement}
This section starts by introducing the motion models of the multirotor and the target. It also presents an assumption regarding the target shape and the image features considered, followed by the introduction of the kinematics and dynamics of the features represented as a bearing-angle pair. The formal problem definition then follows.
\subsection{System Modeling}
\label{sec:system_modeling}
Let $\vect{p}_{\mathrm{B}},\vect{v}_{\mathrm{B}} \in \mathbb{R}^3$ and $\mathrm{\vect{R}} \in SO(3)$ denote the position, velocity and orientation of the vehicle body frame $\{\mathcal{B}\}$ with respect to an inertial frame $\{\mathcal{I}\}$, expressed in $\{\mathcal{I}\}$, respectively. The dynamics of the multirotor are given by
\begin{equation}
\begin{split}
    \dot{\vect{p}}_{\mathrm{B}} &= \vect{v}_{\mathrm{B}} \\
    \dot{\vect{v}}_{\mathrm{B}} &= g\vect{e}_{3} - \frac{T}{m} \vect{R}\vect{e}_{3} \\
    \dot{\vect{R}} &= \vect{R} \vectg{S}(\vectg{\omega}),
\end{split}
\end{equation}
where $\vect{e}_{3}=[0,0,1]^\top$ is a unit vector, $m\in \mathbb{R}^{+}$ denotes the total mass of the vehicle, $g \approx 9.8\text{ms}^{-2}$ denotes the gravity acceleration, $T \in \mathbb{R}_{0}^{+}$ the total thrust and $\omega \in \mathbb{R}^3$ the angular-velocity of $\{\mathcal{B}\}$ with respect to $\{\mathcal{I}\}$ expressed in $\{\mathcal{B}\}$.

Analogously, consider the target dynamics to follow a double integrator model with constant acceleration, according to
\begin{equation}
	\begin{cases}
		\dot{\vect{p}}_{\mathrm{T}} &= \vect{v}_{\mathrm{T}} \\
		\dot{\vect{v}}_{\mathrm{T}} &= \vect{a}_{\mathrm{T}} \\
		\dot{\vect{a}}_{\mathrm{T}} &= \vect{0}
	\end{cases},
\end{equation}
where $\vect{p}_{\mathrm{T}},\vect{v}_{\mathrm{T}} \text{ and } \vect{a}_{\mathrm{T}} \in \mathbb{R}^3$ denote position, velocity and acceleration of the target frame $\{\mathcal{T}\}$ with respect to an inertial frame $\{\mathcal{I}\}$, expressed in $\{\mathcal{I}\}$, respectively.

\subsection{Target Image Features}

Consider that the vehicle is equipped with a monocular camera, aligned with $\{\mathcal{B}\}$ such that it can measure bearing outputs ${}^{B} \vect{b}_i  \in  \mathbb{S}^2$ that correspond to the projection of points $\vect{p}_i  \in \mathbb{R}^3$ in the image plane into a virtual unit sphere, expressed in $\{\mathcal{B}\}$ according to
\begin{equation}
    {}^{B} \vect{b}_i := \frac{\vect{p}_i^{img}}{\norm{\vect{p}_i^{img}}}=\frac{\vect{R}^\top(\vect{p}_i -  \vect{p}_{\mathrm{B}})}{\norm{\vect{R}^\top(\vect{p}_i -  \vect{p}_{\mathrm{B}})}} \in  \mathbb{S}^2,
\end{equation}
with $\!\vect{p}_i^{img}\!\!:=\![X_i, Y_i,\!1]^\top \!\! \in \! \mathbb{P}^2\!$ the perspective projection of $\vect{p}_i$.

\begin{assumption}
    The target is approximated by a 3D sphere with unknown constant radius $r \in \mathbb{R}^{+}$. The vehicle measures the bearing vector ${}^{B}\vect{b} \in \mathbb{S}^2$, pointing towards the center of the target and an additional bearing ${}^{B}\vect{b}_t \in \mathbb{S}^2$ tangent to the surface of the sphere, according to Fig. \ref{fig:measurement_model}.
    \label{assumption:spherical_target}
\end{assumption}
\vspace{-0.3cm}
\begin{figure}[H]
	\centering
    \includegraphics[width=0.49\textwidth]{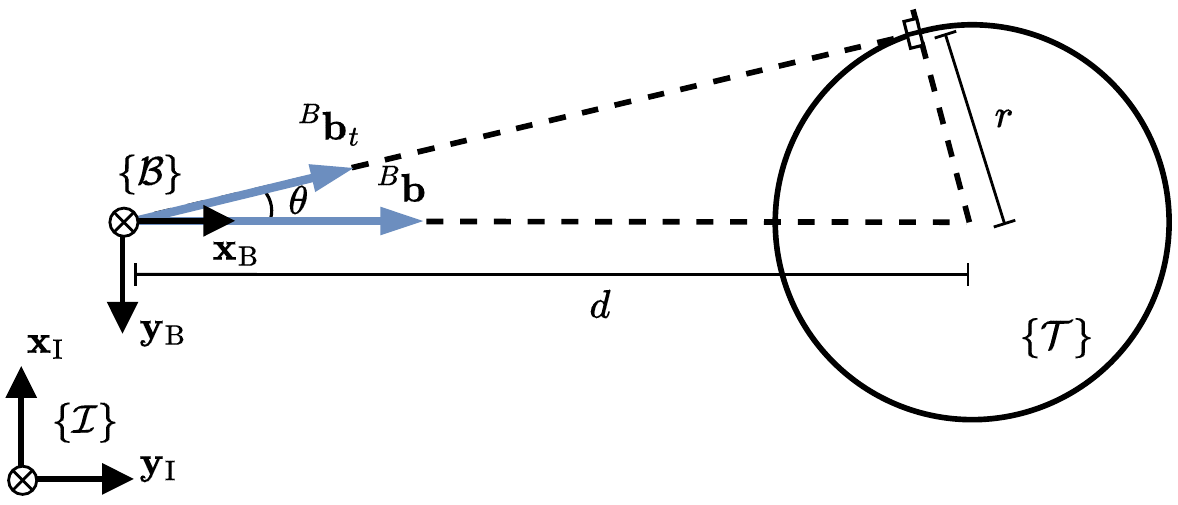}
    \vspace{-0.5cm}
	\caption{Top-down view of a 3-dimensional sphere target observed by the tracker vehicle.}
	\label{fig:measurement_model}
\end{figure}
\vspace{-0.5cm}
Let $\vect{p} := \vect{p}_{\mathrm{T}} - \vect{p}_{\mathrm{B}}$ denote the relative position between the vehicle and the target, with $\left\| \vect{p} \right\| > 0$. The central bearing ${}^{B}\vect{b}$, expressed in $\{\mathcal{B}\}$ is related to the relative position between the target and the vehicle according to
\begin{equation}
    {}^{B}\vect{b} := \vect{R}^\top\frac{\vect{p}}{\norm{\vect{p}}}.
    \label{eqn:bearing_definition_body_frame}
\end{equation}
The bearing vector can also be expressed in $\{\mathcal{I}\}$, according to $\vect{b}:= \vect{R} \, {}^{B} \vect{b}$. It will also be useful for the further development to consider the angle formed between the central bearing ${}^{B}\vect{b}$ and the tangent bearing ${}^{B}\vect{b}_{t}$ given by
\begin{equation}
    \theta := \arccos({{}^{B}\vect{b}^\top {}^{B}\vect{b}_t}), \text{ with } \theta \in [0, \pi/2).
    \label{eqn:angle_bearings_relation}
\end{equation}
From a combination of geometric arguments and Assumption \ref{assumption:spherical_target}, the distance between the vehicle and the center of the target can also be given by the relation
\begin{equation}
    d = \frac{r}{\sin(\theta)}.
    \label{eqn:range_full_eqn}
\end{equation}

\subsection{Bearing and Angle Dynamics}
The time derivative of the range between the target and the vehicle, expressed in $\{\mathcal{I}\}$ is given by
\begin{equation}
    \dot{d}=\frac{d}{dt}\left( \norm{\vect{p}}\right) = \left(\frac{\vect{p}}{\norm{\vect{p}}}\right)^{\top}\dot{\vect{p}} = \vect{b}^{\top} \vect{v},
    \label{eqn:distance_dynamics_from_norm_derivative}
\end{equation}
where $\vect{v}:=\vect{v}_{\mathrm{T}} - \vect{v}_{\mathrm{B}} \in \mathbb{R}^{3}$ denotes the relative velocity between the vehicle and the target expressed in $\{\mathcal{I}\}$. Taking also the time-derivative of \eqref{eqn:range_full_eqn}, yields the dynamics
\begin{equation}
    \dot{d} = -\frac{r \cos(\theta)}{\sin^{2}(\theta)}\dot{\theta} = -d \cot(\theta)\dot{\theta}.
    \label{eqn:distance_dynamics_from_angle_relation_derivative}
\end{equation}
Since we cannot measure the distance $d$ directly, consider also the introduction of a scaled relative velocity variable $\vect{w} \in \mathbb{R}^3$ given by
\begin{equation}
    \vect{w} := \frac{1}{r}\vect{v},
    \label{eqn:relative_visual_velocity_equation}
\end{equation}
which coupled with \eqref{eqn:distance_dynamics_from_norm_derivative} and \eqref{eqn:distance_dynamics_from_angle_relation_derivative}, gives the following dynamics of the angle formed between the two bearing vectors
\begin{equation}
    \dot{\theta} = -\frac{\sin^{2}(\theta)}{\cos(\theta)}\vect{b}^{\top} \vect{w}.
    \label{eqn:angle_dynamics}
\end{equation}
To derive the dynamics of the central bearing $\vect{b}$ expressed in $\{\mathcal{I}\}$, consider the time-derivative of \eqref{eqn:bearing_definition_body_frame} rotated to the inertial frame, given by
\begin{equation}
    \dot{\vect{b}} = \frac{1}{d}\vectg{\Pi}_\vect{b} \vect{v}.
    \label{eqn:derivative_of_the_bearing}
\end{equation}
By replacing \eqref{eqn:range_full_eqn} and \eqref{eqn:relative_visual_velocity_equation} in \eqref{eqn:derivative_of_the_bearing}, the dynamics of the bearing can be expressed in terms of the angle and the relative velocity, according to
\begin{equation}
    \dot{\vect{b}} =\sin(\theta) \vectg{\Pi}_\vect{b} \vect{w}.
    \label{eqn:bearing_dynamics_angle}
\end{equation}
Finally, the dynamics of the visual relative velocity are given by
\begin{equation}
    \dot{\vect{w}}:=\frac{1}{r}(\vect{a}_{\mathrm{T}} - \vect{u}),
    \label{eqn:relative_velocity_dynamics}
\end{equation}
where $\vect{u} \in \mathbb{R}^3$ is a virtual acceleration input to be designed.

The relative velocity can be decomposed in components parallel and orthogonal to the bearing vector, according to 
\begin{equation}
	\vect{w} = \vectg{\Pi}_\vect{b}\vect{w} + \vect{b}\vect{b}^{\top}\vect{w}.
\end{equation}
Given that measurements of $\dot{\vect{b}}$ and $\dot{\theta}$ can be obtained via discrete differentiation, it follows directly from \eqref{eqn:angle_dynamics} and \eqref{eqn:bearing_dynamics_angle} that a noisy measurement of $\vect{w}$ can be obtained from
\begin{equation}
	\vect{w}^{\text{meas}} := \frac{1}{\sin(\theta)} \dot{\vect{b}} -\vect{b}\frac{\cos(\theta)}{\sin^2(\theta)}\dot{\theta}.
	\label{eqn:measured_relative_velocity}
\end{equation}
The noise in $\vect{w}^{\text{meas}}$ is assumed to be high frequency and have zero bias.

From \eqref{eqn:angle_dynamics} and \eqref{eqn:bearing_dynamics_angle} it becomes clear that the time-derivative of the bearing $\vect{b}$ only depends on the orthogonal component of the velocity $\vectg{\Pi}_\vect{b}\vect{w}$, which corresponds to the tangent space to the sphere at $\vect{b}$. The angle, inversely proportional to the unmeasured range, depends on the parallel component of the relative velocity to the bearing vector $\vect{b}\vect{b}^{\top}\vect{w}$. This yields a new parametrization of the system in polar coordinates, which is suitable for control design, as the target size in the image, dictated by $\theta$, can be velocity controlled independently from the relative direction between the vehicle and the target. Unlike in a Cartesian coordinate space, in this new parametrization, the vehicle is not drawn towards the target when moving along the gradient with the desired relative direction opposite to the current one, according to Fig. \ref{fig:cost}.
\begin{figure}[H]
	\centering
    \includegraphics[width=0.48\textwidth]{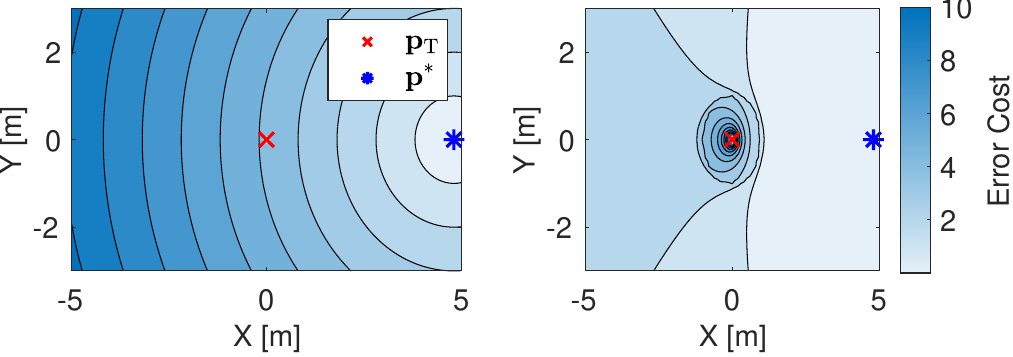}
    \vspace{-0.5cm}
	\caption{Example level sets of the squared norm of the tracking error, parameterized by Cartesian coordinates (left), versus polar coordinates (right), for observing a static target with radius $r=1\,\SI{}{m}$ located at the origin, from a relative reference position $\vect{p}^{\ast}=[-4.8, 0,0]^{\top}$.}
	\label{fig:cost}
\end{figure}
\vspace{-0.5cm}
\subsection{System Dynamics}
Before proceeding with a formal problem definition, consider an additional change of variables given by
\begin{equation}
	x := \sin(\theta),
 \label{eqn:change_of_variables_x_sin_theta}
\end{equation}
such that its time-derivative yields
\begin{equation}
    \dot{x} = \cos(\theta)\dot{\theta}.
    \label{eqn:dynamics_of_scale}
\end{equation}

By replacing \eqref{eqn:angle_dynamics} and \eqref{eqn:change_of_variables_x_sin_theta} in \eqref{eqn:dynamics_of_scale}, the final dynamics of the system can be given by
\begin{equation}
    \begin{cases}
        \dot{\vect{b}} &= x \vectg{\Pi}_\vect{b}\vect{w} \\
        \dot{x} &= -x^2 \vect{b}^{\top}\vect{w} \\
        \dot{\vect{w}} &= \frac{1}{r} (\vect{a}_{\mathrm{T}} - \vect{u}).
    \end{cases}
    \label{eqn:final_dynamics_model}
\end{equation}

This additional change of coordinates not only simplifies the dynamics of the system, but also preserves the general shape of the coordinate space introduced in Fig. \ref{fig:cost} (right). It also bounds the relative distance between the vehicle and the target such that $x \in [0,1)$, where $x=0$ corresponds to the scenario where the vehicle is infinitely far away from the target and $x=1$ to the vehicle touching the surface of the spherical target.

\begin{problem}
Consider the target-tracker system with dynamics described by \eqref{eqn:final_dynamics_model}. Let $\vect{b}^{\ast} \in \mathbb{S}^{2}$ be a constant desired bearing reference and $\theta^{\ast} \in [0, \pi/2)$ the desired relative size of the target in the image, which has a direct mapping to $x^{\ast} \in [0, 1)$. Design a control law that generates a bounded virtual acceleration input $\vect{u} \in \mathbb {R}^{3}$ such that the system tracks the reference signals.
\label{problem:tracker_target_tracking}
\end{problem}
\section{CONTROL DESIGN METHODOLOGY}
\label{sec:Control_design}
In this section, a backstepping nonlinear controller design is derived for performing visual target tracking. Consider a bearing error and a relative distance error tracking terms, given by
\begin{equation}
    \begin{cases}
        \vectg{\delta}_1 &:= \vect{b} - \vect{b}^{\ast} \\
        \delta_2 &:= x - x^{\ast}.
    \end{cases}
\end{equation}
The goal of Problem \ref{problem:tracker_target_tracking} is to drive the errors $\vectg{\delta}_1$ and $\delta_2$ asymptotically to zero. Notice, that due to the choice of coordinates, both error variables are naturally bounded, such that $\norm{\vectg{\delta}_1} \in [0, 2)$ and $\delta_2 \in [-1,1]$. 

Taking the time-derivative of $\delta_1$ yields the dynamics of the bearing error given by
\begin{equation}
    \begin{split}
        \dot{\vectg{\delta}}_1 = \dot{\vect{b}} -  \dot{\vect{b}}^{\ast}.
    \end{split}
\end{equation}
Since the bearing reference is assumed to be constant, $
\dot{\vect{b}}^{\ast}=0$ and from the system dynamics \eqref{eqn:final_dynamics_model}, the final bearing error dynamics are given by
\begin{equation}
    \dot{\vectg{\delta}}_1 = x\vectg{\Pi}_\vect{b} \vect{w}.
    \label{eqn:delta_1_dynamics}
\end{equation}
Analogously, the time-derivative of the relative distance error is given by
\begin{equation}
    \dot{\delta}_2 = -x^2 \vect{b}^{\top}\vect{w}.
    \label{eqn:delta_2_dynamics}
\end{equation}

Before proceeding with the control design, we introduce a desired relative velocity $\vect{w}_d \in \mathbb{R}^3$, given by
\begin{equation}
    \vect{w}_d := \frac{k_1}{x} \vectg{\Pi}_\vect{b} \vect{b}^{\ast} + \frac{k_2}{x^2}\delta_2\vect{b},
    \label{eqn:proposed_virtual_desired_velocity}
\end{equation}
where $k_1 , k_2 > 0$ can be regarded as proportional gains along the components orthogonal and parallel to the bearing, respectively.

Consider the Lyapunov candidate function, given by
\begin{equation}
    V_1 := \frac{1}{2}\norm{\vectg{{\delta}}_1}^{2} + \frac{1}{2} \mathbf{\delta}_2^2.
\end{equation}
The time-derivative of $V_1$ is:
\begin{equation}
    \begin{split}
        \dot{V}_1 &= x\vectg{\delta}_{1}^{\top}\vectg{\Pi}_\vect{b}\vect{w} -x^2 \delta_2 \vect{b}^{\top}\vect{w}\\
        &= \vect{w}^{\top}(x\vectg{\Pi}_\vect{b}\vectg{\delta}_{1} - x^{2}\delta_2\vect{b})\\
        &= (\vectg{\delta}_3 + \vect{w}_d)^\top(x\vectg{\Pi}_\vect{b} \vectg{\delta}_{1} - x^2 \delta_2 \vect{b}),
    \end{split}
    \label{eqn:V1_dot}
\end{equation}
where $\vectg{\delta_3}:= \vect{w} - \vect{w}_d \in \mathbb{R}^3$ is a relative velocity tracking error, with dynamics:
\begin{equation}
    \dot{\vectg{\delta}}_3 = \frac{1}{r}(\vect{a}_\mathrm{T} - \vect{u}) - \dot{\vect{w}}_d.
    \label{eqn:delta_3_dynamics}
\end{equation}
To continue the design of the control law, consider a second Lyapunov function
\begin{equation}
    V_2 := V_1 + \frac{1}{2} \norm{\vectg{\delta}_3}^2,
\end{equation}
with time-derivative given by
\begin{equation}
    \begin{split}
         \dot{V}_2 =& \dot{V}_1 + \vectg{\delta}_{3}^{\top} \left(\frac{1}{r} (\vect{a}_{\mathrm{T}} - \vect{u}) - \dot{\vect{w}}_d\right) \\
                   =& \underbrace{-k_1\vectg{\delta}_{1}^{\top}\vectg{\Pi}_\vect{b}\vectg{\delta}_{1} - k_2 \delta_2^2}_{-W(\vectg{\delta}_{1}, \delta_2)} \\
                   &+ \vectg{\delta}_3^{\top}(x\vectg{\Pi}_\vectg{b} \vectg{\delta}_{1} - x^2 \delta_2 \vect{b} - \dot{\vect{w}}_d + \vectg{\rho} - \frac{1}{r}\vect{u}),
    \end{split}
    \label{eqn:V2_dot}
\end{equation}
where $\vectg{\rho}:= \vect{a}_{\mathrm{T}}/r \in \mathbb{R}^3$ is the unknown constant scaled target acceleration. Consider a new virtual control
\begin{equation}
    \vect{u}_0 := \frac{1}{\hat{r}} \vect{u},
    \label{eqn:u0_virtual_input}
\end{equation}
where $\hat{r}$ is an estimation of the unknown constant radius of the target. Consider also two new adaptation error terms
\begin{equation}
    \begin{cases}
        \Tilde{\vectg{\rho}} &:= \vectg{\rho} - \hat{\vectg{\rho}} \\
        \Tilde{r} &:= r - \hat{r}.
    \end{cases}
    \label{eqn:adaptive_estimation_error}
\end{equation}
Replacing \eqref{eqn:u0_virtual_input} and \eqref{eqn:adaptive_estimation_error} in \eqref{eqn:V2_dot} yields
\begin{equation}
    \begin{split}
        \dot{V}_2 =& -W(\vectg{\delta}_1, \delta_2) + \vectg{\delta}_3^{\top}(x\vectg{\Pi}_\vect{b} \vectg{\delta}_1 - x^2 \delta_2 \vect{b} - \dot{\vect{w}}_d \\
        &+ \hat{\vectg{\rho}} + \Tilde{\vectg{\rho}} - \frac{\hat{r}}{r}\vect{u}_0).
    \end{split}
\end{equation}

Let us consider a final Lyapunov function
\begin{equation}
    V_3 := V_2 + \frac{1}{2k_r r} \Tilde{r}^2 + \frac{1}{2} \Tilde{\vectg{\rho}}^{\top}\vect{K}_\vectg{\rho}^{-1}\Tilde{{\vectg{\rho}}},
\end{equation}
with $k_r \in \mathbb{R}^{+}$ and $\vect{K}_\vectg{\rho} \succeq \vect{0}$ corresponding to adaptation gains. Taking the time-derivative of $V_3$, we can verify
\begin{equation}
    \begin{split}
        \dot{V}_3 =& -W(\vectg{\delta}_1, \delta_2) + \Tilde{\vectg{\rho}}^{\top}(\vectg{\delta}_3 -\vect{K}_\vectg{\rho}^{-1}\dot{\hat{\vectg{\rho}}}) + \frac{\Tilde{r}}{r}\left( \vectg{\delta}_3^{\top} \vect{u}_0 - \frac{1}{k_r}\dot{\hat{r}}\right)\\
        &+ \vectg{\delta}_3^{\top}\left(x\vectg{\Pi}_\vect{b} \vectg{\delta}_1 - x^2 \delta_2 \vect{b} - \dot{\vect{w}}_d + \hat{\vectg{\rho}} - \vect{u}_0 \right).
    \end{split}
\end{equation}

\begin{theorem}
    Consider the target-tracker system described by \eqref{eqn:final_dynamics_model} and the error dynamics given by (\ref{eqn:delta_1_dynamics},~\ref{eqn:delta_2_dynamics}) and \eqref{eqn:delta_3_dynamics}. Consider also the relation between the vehicle acceleration and the virtual input \eqref{eqn:u0_virtual_input}. For the control law
    \begin{equation}
        \vect{u}_0 := \hat{\vectg{\rho}} - \dot{\vect{w}}_d - x \vectg{\Pi}_\vect{b} \vect{b}^{\ast} - x^2 \delta_2 \vect{b} + \vect{K}_3 \vectg{\delta}_3,
    \end{equation}
where $\vect{K}_3 \succeq \vect{0}$ a gain matrix, with adaptive observers
\begin{equation}
    \begin{cases}
        \dot{\hat{r}} &:= k_r \vectg{\delta}_{3}^{\top}\vect{u}_0 \\
        \dot{\hat{\vectg{\rho}}} &:= \vect{K}_\vectg{\rho} \vectg{\delta}_3
    \end{cases},
\end{equation}
there exists a compact set of initial conditions where target tracking is achieved, guaranteeing that the errors $\vectg{\delta}_1$, $\delta_2$ and $\vectg{\delta}_3$ converge to zero.
\end{theorem}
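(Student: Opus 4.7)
The plan is to substitute the proposed control $\vect{u}_0$ and the two adaptive update laws into the expression for $\dot V_3$ derived immediately before the theorem, show that the resulting Lyapunov derivative is negative semi-definite, and then invoke Barbalat's lemma on a forward-invariant sublevel set of $V_3$.

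First, I would plug $\dot{\hat{\vectg{\rho}}} = \vect{K}_\vectg{\rho} \vectg{\delta}_3$ and $\dot{\hat r} = k_r \vectg{\delta}_3^{\top} \vect{u}_0$ into $\dot V_3$; they exactly cancel the two adaptive cross terms $\Tilde{\vectg{\rho}}^{\top}(\,\cdot\,)$ and $\tfrac{\Tilde r}{r}(\,\cdot\,)$. What remains multiplies $\vectg{\delta}_3^{\top}$ from the left, and after substituting the control law the key geometric cancellation
\begin{equation*}
x\vectg{\Pi}_\vect{b}\vectg{\delta}_1 + x\vectg{\Pi}_\vect{b}\vect{b}^{\ast} = x\vectg{\Pi}_\vect{b}(\vectg{\delta}_1 + \vect{b}^{\ast}) = x\vectg{\Pi}_\vect{b}\vect{b} = \vect{0}
\end{equation*}
eliminates the feedforward and reference terms, leaving
\begin{equation*}
\dot V_3 = -k_1\,\vectg{\delta}_1^{\top}\vectg{\Pi}_\vect{b}\vectg{\delta}_1 - k_2\,\delta_2^{2} - \vectg{\delta}_3^{\top}\vect{K}_3\vectg{\delta}_3 \leq 0 .
\end{equation*}
Positive-definiteness of $V_3$ in $(\vectg{\delta}_1,\delta_2,\vectg{\delta}_3,\Tilde{\vectg{\rho}},\Tilde r)$ then guarantees boundedness of all error and adaptation variables, so $\hat r$ and $\hat{\vectg{\rho}}$ stay in a compact set. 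Checking that $\dot{\vect{b}}$, $\dot x$, $\dot{\vect{w}}_d$ and $\dot{\vectg{\delta}}_3$ are bounded along trajectories gives uniform continuity of $\dot V_3$, and Barbalat's lemma yields $\vectg{\Pi}_\vect{b}\vectg{\delta}_1 \to \vect{0}$, $\delta_2 \to 0$, and $\vectg{\delta}_3 \to \vect{0}$.

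The delicate step is upgrading $\vectg{\Pi}_\vect{b}\vectg{\delta}_1 \to \vect{0}$ to $\vectg{\delta}_1 \to \vect{0}$. Expanding gives $\vectg{\delta}_1^{\top}\vectg{\Pi}_\vect{b}\vectg{\delta}_1 = 1 - (\vect{b}^{\top}\vect{b}^{\ast})^{2}$, which vanishes not only at the desired point $\vect{b} = \vect{b}^{\ast}$ but also at the antipodal (unstable) equilibrium $\vect{b} = -\vect{b}^{\ast}$. This is precisely why the statement is local: by restricting the initial conditions so that $V_3(0)$ is small enough to exclude any trajectory reaching $\vect{b}^{\top}\vect{b}^{\ast} = -1$, the sublevel set $\{V_3 \leq V_3(0)\}$ is forward-invariant and isolates the desired equilibrium, so $\vectg{\delta}_1 \to \vect{0}$.

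The main obstacles I anticipate are (i) this antipodal equilibrium, which forces the characterization of an explicit compact domain of attraction in terms of $V_3(0)$ and the initial bearing misalignment, and (ii) guaranteeing that $\hat r$ stays strictly positive so that the inversion $\vect{u} = \hat r\,\vect{u}_0$ in \eqref{eqn:u0_virtual_input} is well-defined throughout; this is handled by choosing the initial $\hat r$ sufficiently close to $r$ (boundedness of $\Tilde r$ from $V_3$ then confines $\hat r$ to a compact interval) or, if necessary, by adding a projection to the $\dot{\hat r}$ law. Note also that without a persistence-of-excitation argument the adaptation errors $\Tilde{\vectg{\rho}}$ and $\Tilde r$ are only guaranteed to remain bounded, which is standard for this class of certainty-equivalence adaptive designs.
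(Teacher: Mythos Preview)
Your proposal is correct and follows essentially the same route as the paper: substitute the adaptive laws and $\vect{u}_0$ into $\dot V_3$ to obtain $\dot V_3 = -k_1\vectg{\delta}_1^{\top}\vectg{\Pi}_\vect{b}\vectg{\delta}_1 - k_2\delta_2^{2} - \vectg{\delta}_3^{\top}\vect{K}_3\vectg{\delta}_3 \le 0$, deduce boundedness of all error and adaptation states, and apply Barbalat's lemma. Your treatment is in fact more explicit than the paper's on two points it leaves terse: the geometric cancellation $\vectg{\Pi}_\vect{b}(\vectg{\delta}_1+\vect{b}^{\ast})=\vectg{\Pi}_\vect{b}\vect{b}=\vect{0}$, and the identification of the antipodal equilibrium via $\vectg{\delta}_1^{\top}\vectg{\Pi}_\vect{b}\vectg{\delta}_1 = 1-(\vect{b}^{\top}\vect{b}^{\ast})^{2}$, which is exactly why the paper states the result only for a compact set of initial conditions; your additional remark on keeping $\hat r>0$ is a practical concern the paper does not address.
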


\begin{proof}
Starting with the positive definite Lyapunov function $V_3$, and using the proposed $\vect{u}_0$, $\dot{\hat{r}}$ and $\dot{\hat{\vectg{\rho}}}$, its closed-loop time-derivative is given by
\begin{equation}
    \dot{V}_3 = -k_1\vectg{\delta}_{1}^{\top} \vectg{\Pi}_\vect{b} \vectg{\delta}_{1} - k_2 \delta_2^2 - \vectg{\delta}_{3}^{\top} \vect{K}_3 \vectg{\delta}_{3} \leq 0,
\end{equation}
which is negative semi-definite on the error states. Since the system error dynamics are non-autonomous, we resort to Barbalat's Lemma to prove the convergence of $\dot{V}_3$ to zero. Since the $V_3$ is positive definite with respect to the states $\vectg{\delta}_1, \delta_2$ and $\vectg{\delta}_3$, and the adaptation error terms $\vectg{\Tilde{\rho}}$ and $\Tilde{r}$, and $\dot{V}_3$ is negative semi-definite, we can conclude that the state errors are bounded. From the boundedness of the state errors, we can also conclude that the estimates $\hat{\vectg{\rho}}$ and $\hat{r}$, and the control input $\vect{u}$ are also bounded. Computing $\ddot{V}_3$, shows that the second time-derivative of the Lyapunov function only depends on bounded terms, hence it is also bounded, from which follows that the time derivative $\dot{V}_3$ is uniformly continuous. By direct application of Barbalat's Lemma, we can conclude that $\dot{V}_3$ converges to zero and, the errors $\vectg{\Pi}_\vect{b}\vectg{\delta}_1, \delta_2$ and $\vectg{\delta}_3$ converge asymptotically to the origin. Given that all errors are bounded, we can conclude that there exists a compact set of initial conditions, such that $\vectg{\delta}_1$ is guaranteed to converge to zero \cite{BHAT200063}. However, convergence of the the adaptive estimation terms to their real values cannot be concluded.
\end{proof}
\begin{remark}
    In practice, the inclusion of $\dot{\vect{w}}_d$ in the proposed control law is optional as it may introduce unnecessary noise in the system, as a consequence of assuming that $\dot {\vect{b}}$ and $\dot{\theta}$ are measured via discrete differentiation. Since the reference bearing and relative distance are constant, $\vect{w}_d$ will be zero at the origin, and the remaining terms on time-derivative of the Lyapunov function that depend on $\dot{\vect{w}}_d$ will also vanish to zero at the origin.
    \label{remark:w_d_dot}
\end{remark}
\section{Target Visibility Constraints}
\label{sec:target_visibility_constraints}
In the previous section, we derived a control law that generates a desired virtual acceleration vector $\vect{u}$ to be tracked. In ideal conditions, the thrust vector of the multirotor would align with this input vector. However, in most application scenarios a camera has a limited field-of-view that must be accounted for, leading to the following assumption.
\begin{assumption}
	The camera sensor is fixed, forward-facing and aligned with the vehicle's body frame $\{\mathcal{B}\}$. Moreover, the reference bearing vector $\vect{b}^{\ast}$ is such that the target is in the field-of-view of the vehicle.
	\label{assumption:monocular_fixed_camera}
\end{assumption}

To take into account the target visibility limitations, the desired attitude of the vehicle can be decomposed into three orthogonal components, according to
\begin{equation}
	\vect{R}_{des} := \begin{bmatrix}\vect{x}_{\mathrm{B}}^{des} & \vect{y}_{\mathrm{B}}^{des} & \vect{z}_{\mathrm{B}}^{des}\end{bmatrix},
\end{equation}
where the desired z-axis of the vehicle is dictated by the control law derived in the previous section, according to
\begin{equation}
	\vect{z}_{\mathrm{B}}^{\ast}:= - \frac{\vect{u} - g \vect{e}_3}{\norm{\vect{u} - g \vect{e}_3}}.
\end{equation}
Given that the yaw-angle is an extra degree of freedom, we can always choose an arbitrary $\vect{x}_{\mathrm{B}}$ such that the target is always within an horizontal field-of-view of the vehicle. In this case, we select $\vect{x}_{\mathrm{B}}$ such that it aligns vertically with the bearing vector. To achieve this, start by considering that the desired y-axis of the vehicle is given by
\begin{equation}
	\vect{y}_{\mathrm{B}}^{des}:=\frac{\vect{z}_{\mathrm{B}}^{\ast} \times \vect{b}}{\norm{\vect{z}_{\mathrm{B}}^{\ast} \times \vect{b}}}.
\end{equation}

To encode the vertical visibility constraint, we consider two virtual cones along the $\vect{z}_{\mathrm{B}}$ axis that encode visibility dead-zones, according to Fig. \ref{fig:virtual_system_architecture}. In order to guarantee that the target is always visible, the following constraint must be satisfied
\begin{equation}
	-\cos(\varphi) \leq \vect{b}^{\top}\vect{z}_{\mathrm{B}} \leq \cos(\varphi),
    \label{eqn:vertical_visibility_constraint}
\end{equation}
where $\varphi \in [0, \pi/2)$ is the vertex angle that defines the size of the dead-zone. To ensure that \eqref{eqn:vertical_visibility_constraint} is satisfied, an extra rotation about $\vect{y}_{\mathrm{B}}^{des}$ might be necessary. Consider the final $\vect{z}_{\mathrm{B}}^{des}$ reference to be given by 
\begin{equation}
	\vect{z}_{\mathrm{B}}^{des} := \vect{R}(\psi, \vect{y}_{\mathrm{B}}^{des})\vect{z}_{\mathrm{B}}^{\ast},
\end{equation}
where $\vect{R}(\psi, \vect{y}_{\mathrm{B}}^{des}) \in SO(3)$ is a rotation matrix, given by the Rodrigues' formula, according to
\begin{equation}
\resizebox{.46\textwidth}{!}{$
\begin{split}
	\vect{R}(\psi, \vect{y}_{\mathrm{B}}^{des}) :=& \vect{I} + \sin(\psi)\vectg{S}(\vect{y}_{\mathrm{B}}^{des}) + (1-\cos(\psi))\vectg{S}^{2}(\vect{y}_{\mathrm{B}}^{des}).
\end{split}$}
\end{equation}
The rotation angle $\psi \in [0, \pi]$ is given by
\begin{equation}
\resizebox{.42\textwidth}{!}{$
    \psi := 
    \begin{cases}
       \varphi-\arccos(- \vect{b}^{\top}\vect{z}_{\mathrm{B}}^{\ast}) & \text{,}   \vect{b}^{\top}\vect{z}_{\mathrm{B}}^{\ast} \leq -\cos(\varphi) \\
    \arccos(\vect{b}^{\top}\vect{z}_{\mathrm{B}}^{\ast})-\varphi & \text{,}   \vect{b}^{\top}\vect{z}_{\mathrm{B}}^{\ast} \geq \cos(\varphi) \\
        0 & \text{,} \text{otherwise} \\
    \end{cases}$}.
\end{equation}
\begin{figure}[H]
	\centering
 \includegraphics[width=0.43\textwidth]{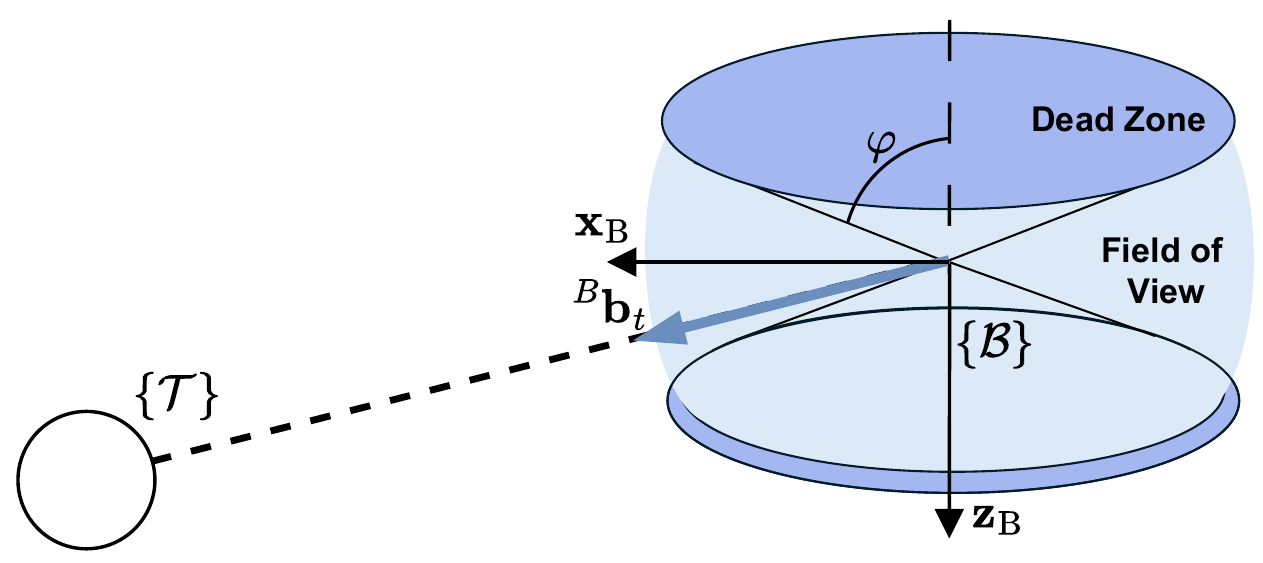}
    \vspace{-0.1cm}
	\caption{Illustration of the visibility model, where the volume in light blue encodes the field of view of the camera sensor.}
	\label{fig:virtual_system_architecture}
\end{figure}
\vspace{-0.3cm}
Finally, we can define the desired $\vect{x}_{\mathrm{B}}^{des}$ component of the attitude to be tracked, according to
\begin{equation}
    \vect{x}_{\mathrm{B}}^{des} := \vect{y}_{\mathrm{B}}^{des} \times \vect{z}_{\mathrm{B}}^{des}.
\end{equation}
The total force desired to be applied on the vehicle can be obtained directly from the desired acceleration control input, according to:
\begin{equation}
    \vect{F}^{\text{des}}:= m(\vect{u} - g \vect{e}_3),
\end{equation}
from which the total thrust $T$ to be applied to the vehicle is given by
\begin{equation}
    T := -\vect{z}_{\mathrm{B}}^{\top}\vect{F}^{\text{des}}.
\end{equation}

To track the desired attitude, one can define the orientation error $\vect{e}_{\mathrm{R}} \in \mathbb{R}^{3}$ given by
\begin{equation}
    \vect{e}_{\mathrm{R}}:=\frac{1}{2}\vectg{S}^{-1}(\vect{R}_{des}^{\top}\vect{R}-\vect{R}^{\top}\vect{R}_{des}).
\end{equation}
The angular-velocity used as input to the vehicle is given by
\begin{equation}
    \vectg{\omega} := -K_{R}\vect{e}_{\mathrm{R}},
\end{equation}
where $K_{R} \succeq 0$ is an attitude gain.

\section{Numerical Simulation}
\label{sec:simulation_results}
To evaluate the performance of our theoretical results, this section presents numerical simulations representative of the closed-loop target-tracker system using MATLAB\textsuperscript{\tiny\textregistered} and SIMULINK. In this example, the vehicle is set to track an accelerating target, with radius $r = 0.25 \,\SI{}{m}$ from a fixed reference direction $\vect{b}^{\ast}=[-1.0, 0.001, 0.0]^{\top}$ with relative angle $x^{\ast}\approx \theta^{\ast}=0.125 \, \SI{}{rad}$, according to Fig. \ref{fig:results_3d}. The initial position of the vehicle and the target are $\vect{p}_{\mathrm{B}}(t_0)=[0, 0, -1.8]^{\top} \SI{}{m}$ and $\vect{p}_{\mathrm{T}}(t_0)=[3, 0.1, -1.0]^{\top} \SI{}{m}$, respectively. Both vehicle and target start with zero velocity, $\vect{v}_\mathrm{B}(t_0)=\vect{v}_\mathrm{T}(t_0)= [0, 0, 0]^{\top} \SI{}{m/s}$, but the target is accelerating, according to $\vect{a}_\mathrm{T}(t_0) = [-0.01, 0.01, 0]^{\top} \SI{}{m/s^2}$.

The bearing vector measurements were corrupted by rotation noise with an angle standard deviation of $1^{\circ}$. Analogously, the angle measurements were also corrupted by noise with standard deviation $10^{-4} \, {}^{\circ}$. The measured relative velocity is computed from the differentiation of these measurements, according to \eqref{eqn:measured_relative_velocity}. The simulated vehicle has a narrow vertical field-of-view with $\varphi=75^{\circ}$, a total mass of $m=1.0 \, \SI{}{Kg}$ and a maximum thrust $T_{max}=34 \, \SI{}{N}$. The control gains considered are $k_1 = 0.4$, $k_2 = 1.2$, $\vect{K}_3 = 0.7\vect{I}_{3}$ and $\vect{K}_R = 5.0\vect{I}_{3}$, with adaptive observer gains $\vect{K}_{\vectg{\rho}}=10^{-4}\vect{I}_{3}$ and $k_r=0.1$. The adaptive observers were initialized with $\hat{r}(t_0)=1.0$ and $\hat{\vectg{\rho}}(t_0)=[0, 0, 0]^{\top}$.
\begin{figure}[H]
	\centering
 \includegraphics[width=0.45\textwidth]{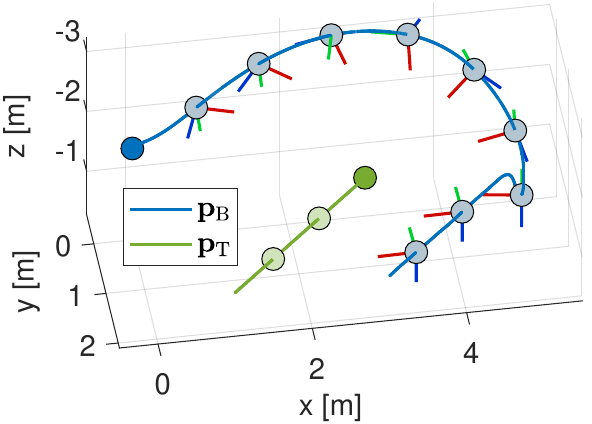}
	\caption{Simulation of the target-tracker system with a constant bearing-angle reference. The vehicle and tracker positions are depicted in blue and green, respectively. Each circle represents a time step, with the attitude of the vehicle body axis represented using red, green and blue vectors, corresponding to $\vect{x}_{\mathrm{B}}$, $\vect{y}_{\mathrm{B}}$ and $\vect{z}_{\mathrm{B}}$, respectively.}
	\label{fig:results_3d}
\end{figure}
\vspace{-0.2cm}

The performance of the controller is shown in Fig. \ref{fig:errors_along_time}, where it is observed that bearing tracking error $\vectg{\delta}_1$, the relative distance error $\delta_2$ and the relative velocity error $\vectg{\delta}_3$ converge to the origin in finite time. It can be observed that the desired relative velocity also approaches zero as the vehicle approaches the desired reference. Fig. \ref{fig:errors_along_time} also shows that although the tracking errors converge to zero, the adaptive estimators $\hat{\vectg{\rho}}$ and $\hat{r}$ converge to constant values that do not correspond to real physical values, in accordance with the stability analysis.
\begin{figure}
	\centering
 \includegraphics[width=0.50\textwidth]{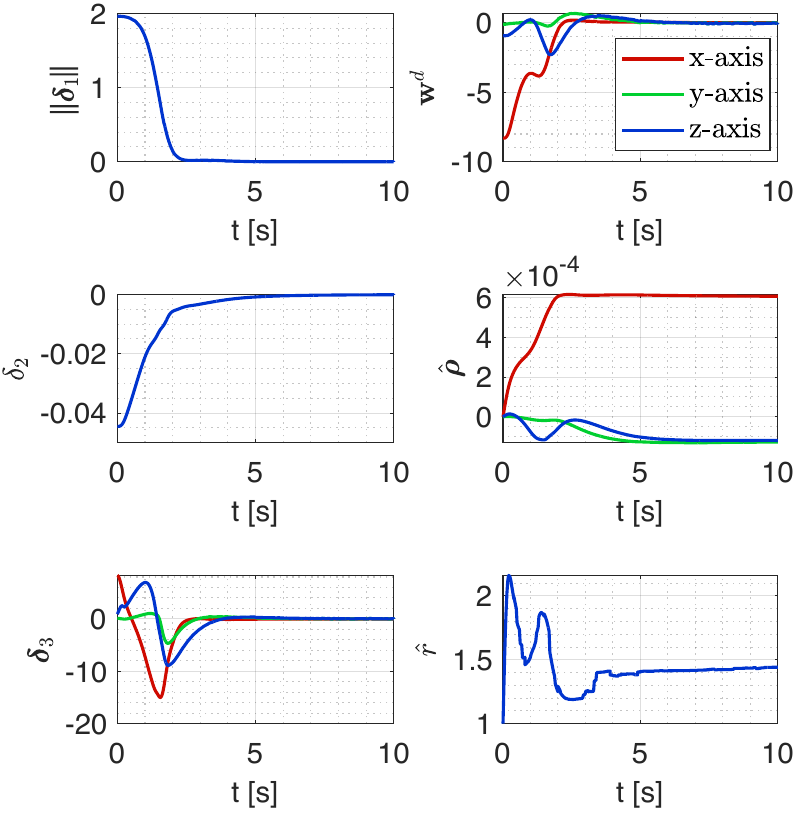}
	\caption{Evolution of the bearing tracking error norm $\norm{\vectg{\delta}_1}$, the relative distance error $\delta_2$, relative velocity error $\vectg{\delta}_3$, desired relative velocity $\vect{w}^{d}$ and the adaptive estimators $\hat{\vectg{\rho}}$ and $\hat{\vect{r}}$.}
	\label{fig:errors_along_time}
\end{figure}
\section{CONCLUSION}
\label{sec:conclusion}
This work introduced a novel approach for tracking a moving target that can be approximated by a sphere with an unknown radius. By decoupling the system state into a bearing-angle pair, the nonlinear dynamics of the system expressed in polar coordinates naturally emerged. This transformation led to a new coordinate system practical for real-world scenarios: it allowed  us to use a reference angle to dictate the desired size of the target within the image and a bearing for its relative direction. The geometric properties of this transformed system were then exploited in the design of an IBVS inspired nonlinear control law, enabling  tracking the target from a fixed relative position. Finally, field-of-view limitations were incorporated as input constraints, which is essential for real-world applications. Simulation results showcased the performance of the proposed algorithm, with real test flights planned for future work.

\section*{Acknowledgment}
The authors gratefully acknowledge J. Pinto, P. Santos and Prof. T. Hamel for their suggestions that helped improve the quality of this work.
\vspace{-0.2cm}
\bibliographystyle{IEEEtran}
\bibliography{IEEEabrv, bibliography}

\begin{thebibliography}{10}
\providecommand{\url}[1]{#1}
\csname url@samestyle\endcsname
\providecommand{\newblock}{\relax}
\providecommand{\bibinfo}[2]{#2}
\providecommand{\BIBentrySTDinterwordspacing}{\spaceskip=0pt\relax}
\providecommand{\BIBentryALTinterwordstretchfactor}{4}
\providecommand{\BIBentryALTinterwordspacing}{\spaceskip=\fontdimen2\font plus
\BIBentryALTinterwordstretchfactor\fontdimen3\font minus
  \fontdimen4\font\relax}
\providecommand{\BIBforeignlanguage}[2]{{%
\expandafter\ifx\csname l@#1\endcsname\relax
\typeout{** WARNING: IEEEtran.bst: No hyphenation pattern has been}%
\typeout{** loaded for the language `#1'. Using the pattern for}%
\typeout{** the default language instead.}%
\else
\language=\csname l@#1\endcsname
\fi
#2}}
\providecommand{\BIBdecl}{\relax}
\BIBdecl

\bibitem{Moving_dock_homing}
Z.~Tang, L.~Li, P.~Serra, D.~Cabecinhas, T.~Hamel, R.~Cunha, and C.~Silvestre,
  ``Homing on a moving dock for a quadrotor vehicle,'' in \emph{TENCON 2015 -
  2015 IEEE Region 10 Conference}, 2015, pp. 1--6.

\bibitem{aerial_cinematography_Bahareh}
B.~Sabetghadam, A.~Alcántara, J.~Capitán, R.~Cunha, A.~Ollero, and
  A.~Pascoal, ``Optimal trajectory planning for autonomous drone
  cinematography,'' in \emph{2019 European Conference on Mobile Robots (ECMR)},
  2019, pp. 1--7.

\bibitem{ducted_fan_le_bras}
F.~Le~Bras, R.~Mahony, T.~Hamel, and P.~Binetti, ``Adaptive filtering and image
  based visual servo control of a ducted fan flying robot,'' in
  \emph{Proceedings of the 45th IEEE Conference on Decision and Control}, 2006,
  pp. 1751--1757.

\bibitem{helical_guidance}
J.~Li, Z.~Ning, S.~He, C.-H. Lee, and S.~Zhao, ``Three-dimensional bearing-only
  target following via observability-enhanced helical guidance,'' \emph{IEEE
  Transactions on Robotics}, vol.~39, no.~2, pp. 1509--1526, 2023.

\bibitem{bearing_angle_approach}
Z.~Ning, Y.~Zhang, J.~Li, Z.~Chen, and S.~Zhao, ``A bearing-angle approach for
  unknown target motion analysis based on visual measurements,'' \emph{The
  International Journal of Robotics Research}, vol.~43, no.~8, pp. 1228--1249,
  2024.

\bibitem{BATISTA20131065_marine}
P.~Batista, C.~Silvestre, and P.~Oliveira, ``Globally exponentially stable
  filters for source localization and navigation aided by direction
  measurements,'' \emph{Systems \& Control Letters}, vol.~62, no.~11, pp.
  1065--1072, 2013.

\bibitem{pe_lebras_2027_springer}
F.~Le~Bras, T.~Hamel, R.~Mahony, and C.~Samson, \emph{Observers for Position
  Estimation Using Bearing and Biased Velocity Information}.\hskip 1em plus
  0.5em minus 0.4em\relax Cham: Springer International Publishing, 2017, pp.
  3--23.

\bibitem{BATISTA2011101_generic_observability}
P.~Batista, C.~Silvestre, and P.~Oliveira, ``On the observability of linear
  motion quantities in navigation systems,'' \emph{Systems \& Control Letters},
  vol.~60, no.~2, pp. 101--110, 2011.

\bibitem{observalibility_enhancements}
H.-I. Lee, H.-S. Shin, and A.~Tsourdos, ``Weaving guidance for missile
  observability enhancement,'' \emph{IFAC-PapersOnLine}, vol.~50, no.~1, pp.
  15\,197--15\,202, 2017, 20th IFAC World Congress.

\bibitem{anjaly_FIM_TARGET}
P.~Anjaly and A.~Ratnoo, ``Observability enhancement of maneuvering target with
  bearings-only information,'' \emph{Journal of Guidance, Control, and
  Dynamics}, vol.~41, no.~1, pp. 184--198, 2018.

\bibitem{le_bras_velocity_estimation}
F.~Le~Bras, T.~Hamel, and R.~Mahony, ``Nonlinear observer-based visual control
  of a {VTOL} {UAV},'' in \emph{2007 European Control Conference (ECC)}, 2007,
  pp. 2127--2134.

\bibitem{BHAT200063}
S.~P. Bhat and D.~S. Bernstein, ``A topological obstruction to continuous
  global stabilization of rotational motion and the unwinding phenomenon,''
  \emph{Systems \& Control Letters}, vol.~39, no.~1, pp. 63--70, 2000.

\end{thebibliography}

\end{document}